\theoremstyle{definition}
\newtheorem{thm}{Theorem}
\newtheorem{crl}{Corollary}
\DeclareMathOperator*{\esssup}{ess\,sup}
\begin{document}
\title{K\"ahler information manifolds of signal processing filters in weighted Hardy spaces}
\author{Jaehyung Choi}
\address{Goldman Sachs \& Co., 200 West Street, New York, NY 10282, USA}
\email{jj.jaehyung.choi@gmail.com}

\begin{abstract}
	We extend the framework of K\"ahler information manifolds for complex-valued signal processing filters by introducing weighted Hardy spaces and smooth transformations of transfer functions. We demonstrate that the Riemannian geometry induced from weighted Hardy norms for the smooth transformations of its transfer function is a K\"ahler manifold. In this setting, the K\"ahler potential of the linear system geometry corresponds to the squared weighted Hardy norm of the composite transfer function. With the inherent structure of K\"ahler manifolds, geometric quantities on the manifold of linear systems in weighted Hardy spaces can be computed more efficiently and elegantly. Moreover, this generalized framework unifies a variety of well-known information manifolds within the structure of K\"ahler information manifolds for signal filters. Several illustrative examples from time series models are provided, wherein the metric tensor, Levi-Civita connection, and K\"ahler potentials are explicitly expressed in terms of polylogarithmic functions of the poles and zeros of transfer functions parameterized by weight vectors.
\end{abstract}

\maketitle
\section{Introduction}
Information geometry has emerged as a vibrant interdisciplinary field, bridging concepts across diverse research areas. Notably, the differential geometric approaches have led to both theoretical advancements and practical enhancements in time series analysis and signal processing. Since the foundational studies on the information geometry of time series models such as autoregressive and moving average (ARMA) processes, as well as fractionally integrated ARMA (ARFIMA) processes were introduced \cite{amari1987differential, ravishanker1990differential, ravishanker2001differential}, the geometric structure of these models has been further explored through symplectic geometry and K\"ahler geometry \cite{barndorff1997statistics, barbaresco2006information, barbaresco2012information, zhang2013symplectic, barbaresco2014koszul, choi2015kahlerian}. Beyond the theoretical insights, information geometry has also yielded practical applications, particularly in the development of improved Bayesian predictive priors \cite{komaki2006shrinkage, tanaka2008superharmonic, tanaka2018superharmonic, choi2015kahlerian, choi2015geometric, oda2021shrinkage}. In this context, the shrinkage priors for time series models are easily formulated within the K\"ahler extension of information geometry for time series and signal filters \cite{choi2015kahlerian, choi2015geometric, oda2021shrinkage}.
	
	Information geometry, along with its K\"ahlerian formulation, in time series analysis and signal processing is intimately connected to the theory of Hardy spaces in functional analysis. A fundamental assumption in the information geometry of linear systems is that the logarithmic power spectrum belongs to a Hardy space \cite{amari1987differential, amari2000methods}. A similar assumption underlies the K\"ahler structure of information geometry for linear systems: the logarithmic transfer function of a signal filter has a finite Hardy norm \cite{choi2015kahlerian}. In K\"ahlerian linear system geometry, the K\"ahler potential, which encapsulates the geometric structure of the system, is given by the squared Hardy norm of the logarithmic transfer function. Since the logarithmic transfer function admits a series expansion in terms of complex cepstrum coefficients, its Hardy norm coincides with the complex cepstrum norm of the system. Similarly, the Hardy norm of the logarithmic power spectrum corresponds to the power cepstrum norm of the linear system.
	
	Since the K\"ahlerian information geometry discussed above has primarily been developed in the context of the $\alpha$-divergence-induced geometry of signal processing models, the relationship between $\alpha$-divergence and K\"ahler structures is by now relatively well understood \cite{choi2015kahlerian}. Importantly, the metric tensor of the K\"ahlerian information geometry is derived not only from the $\alpha$-divergence but also from the K\"ahler potential, which is given by the square of the unweighted cepstrum norm of the logarithmic transfer function \cite{choi2015kahlerian}.
	
	In contrast, the development of information geometry based on weighted Hardy norms has received little attention. Moreover, existing works on information geometry have not addressed the K\"ahlerization of other information manifolds. For instance, while the mutual information between past and future in ARMA processes where Martin \cite{martin2000metric} computed the associated metric tensor can be interpreted as a weighted complex cepstrum norm, the K\"ahler extension of the corresponding mutual information geometry remains unexplored.
	
	This paper extends the framework of K\"ahlerian information geometry to signal processing filters defined in weighted Hardy spaces. We establish that the geometry of linear systems, where composite functions formed by smooth transformations of the transfer function belong to weighted Hardy spaces, is a K\"ahler manifold equipped with an explicit Hermitian structure. On the resulting information manifold, the K\"ahler potential is given by the squared weighted Hardy norm of the transformed transfer function. In particular, when the logarithmic function is employed as the smooth transformation, the K\"ahler potential corresponds to the squared weighted complex cepstrum norm of the linear system.
	
	These results demonstrate a two-fold generalization of the K\"ahlerian information geometry \cite{choi2015kahlerian}: First, the generalization from the logarithmic transfer function to arbitrary smooth transformations $\phi$ of the transfer function, and second, the extension from the unweighted Hardy space to weighted Hardy spaces with arbitrary weight vectors $\omega$. The ($\phi$, $\omega$)-generalization of K\"ahlerian information geometry provides a unified framework that incorporates various information manifolds. Specifically, for linear systems in weighted Hardy spaces, a family of weight vectors not only induces the K\"ahlerian information geometry \cite{choi2015kahlerian} but also includes the geometry derived from the mutual information between past and future \cite{martin2000metric}.
	
	The structure of this paper is as follows. In the next section, we provide the basic definitions of weighted Hardy spaces, cepstrum, and an introduction to K\"ahler manifolds, which serve as the theoretical foundation for subsequent discussions. In Section \ref{sec_whardy_kahler}, we present the main theorems and corollaries regarding the information manifolds of linear systems in weighted Hardy spaces. Section \ref{sec_whardy_example} revisits several time series models within the framework of weighted Hardy spaces, examining the geometric properties of these linear systems. Finally, we conclude the paper in Section \ref{sec_whardy_conclusion}.

\section{Theoretical backgrounds}
\label{sec_whardy_theory}
	In this section, we first review the definitions of various function spaces in functional analysis and then generalize these spaces to weighted Hardy spaces. Following this, we revisit the concept of the complex cepstrum in signal processing and explore its connection to weighted Hardy norms. Additionally, we provide a brief overview of the properties of K\"ahler manifolds, which will serve as foundational concepts for the subsequent discussions. 
		
\subsection{Weighted Hardy spaces}
	A complex function $f(z)$ is given by the following Fourier series expansion for $-\pi\le w <\pi$:
	\begin{align}
		f(e^{jw})=\sum_{s=-\infty}^{\infty} f_s e^{-jsw},
	\end{align}
	where $f_s$ is the $s$-th Fourier coefficient of the series expansion. By using Z-transformation $(e^{jw}\to z)$, the function $f(z)$ in a complex domain is represented with
	\begin{align}
		f(z)=\sum_{s=-\infty}^{\infty} f_s z^{-s},
	\end{align}
	where $f_s$ is the $s$-th Fourier coefficient of $f$. In the opposite direction, we can transform a polynomial in $z$ to a discrete Fourier series by using inverse Z-transformation $(z\to e^{jw})$.
	
	Various function spaces are defined for functions on complex domains $\Omega$. The first example is the Lebegue space ($L^p$-space) for $1\le p<\infty$ that is the Banach space of functions with finite $L^p$-norms:
	\begin{align}
		L^p=\{f: \|f\|_{L^p}<\infty\},
	\end{align}  
	where the $L^p$-norm of a complex-valued function $f$ on $\Omega$ is defined as
	\begin{align}
		\|f\|_{L^p}=\bigg(\int_\Omega |f(z)|^p d\mu\bigg)^{1/p}.
	\end{align}
	For infinite $p$, $\|f\|_{L^\infty} = \esssup f$ where $\esssup$ is the essential supremum.
	
	In particular, the $p=2$ case is more intriguing than other $p$ values. The $L^2$-space is the function space with finite $L^2$-norms:
	\begin{align}
		L^2=\{f: \|f\|_{L^2}<\infty\}.
	\end{align} 
	The $L^2$-norm of a Fourier-transformed (or Z-transformed) function $f$ is given by
	\begin{align}
		\|f\|_{L^2}=\bigg(\sum_{s=-\infty}^{\infty} |f_s|^2\bigg)^{1/2},
	\end{align}
	where $f_s$ is the $s$-th Fourier (or Z-transformed) coefficient of $f$. The $L^2$-norm is also obtained from the inner product of two functions $f$ and $g$:
	\begin{align}
		\langle f,g \rangle=\frac{1}{2\pi}\int_{\Omega} f(z)\overline{g(z)} d\mu,
	\end{align}
	such that $\|f\|^2_{L^2}=\langle f,f \rangle$.
	
	Let us focus on analytic functions defined on the unit disk $\mathbb{D}$. In this setting, the Hardy space $H^p$ consists of analytic functions with finite $H^p$-norm, defined by
	\begin{align}
		\|f\|_{H^p}=\sup_{0<r<1}\bigg(\frac{1}{2\pi}\int_{-\pi}^{\pi}|f(r e^{i\theta})|^p d\theta\bigg)^{1/p},
	\end{align}
	where $1\le p<\infty$. For infinite $p$, $\|f\|_{H^\infty} = \sup_{|z|< 1} \left|f(z)\right|$.  
	
	For $f$ in $H^p$-spaces, it is possible to define the radial limit of the functions:
	\begin{align}
		\tilde{f}=\lim_{r\to1} f(r e^{i\theta})
	\end{align}
	for almost every $\theta$. Additionally, the relation between $H^p$-norms and $L^p$-norms is given by
	\begin{align}
		\| \tilde{f}\|_{L^p}=\| f \|_{H^p}.
	\end{align}
	
	 Similar to $L^p$-spaces, our interest in this paper is the $p=2$ case among various $H^p$-norms. For Fourier-transformed (or Z-transformed) functions $f$, the $H^2$-space is the function space of
	\begin{align}
		H^2=\{f: \|f\|_{H^2}<\infty\},
	\end{align}
	such that the $H^2$-norm of a Fourier-transformed (or Z-transformed) function is defined as
	\begin{align}
		\|f\|_{H^2}=\bigg(\sum_{s=0}^{\infty} |f_s|^2\bigg)^{1/2},
	\end{align}
	where $f_s$ is the $s$-th Fourier (or Z-transformed) coefficient of $f$.
	
	Similar to the $L^2$-space, the $H^2$-space is also equipped with the inner product. The inner product between two functions $f$ and $g$ in the Hardy spaces is given by
	\begin{align}
		\label{hardy_inner_product}
		\langle f,g \rangle=\frac{1}{2\pi i}\oint_{\mathbb{T}} f(z)\overline{g(z)} d\mu(z),
	\end{align}
	where $\mathbb{T}$ is the unit circle.
	
	Weighted Hardy spaces are defined as the generalization of the Hardy space such that weight vectors are introduced into the definition of the Hardy norm \cite{paulsen2016introduction}. Given a positive sequence  $\omega=(\omega_0,\omega_1,\cdots,\omega_s,\cdots)$ such that $\omega_s>0$ for all non-negative integers $s$, the weighted Hardy norm ($H^2_\omega$-norm) is represented with
	\begin{align}
		\|f\|_{H^2_\omega}=\|f\|_{\omega}=\bigg(\sum_{s=0}^{\infty} \omega_s |f_s|^2\bigg)^{1/2},
	\end{align}
	where $f_s$ is the $s$-th Fourier (or Z-transformed) coefficient of $f$. 
	
	The weighted Hardy space $H^2_\omega$ is defined as a function space of a finite weighted Hardy norm with a given weight vector $\omega$:
	\begin{align}
		H^2_\omega= \{f: \|f\|_\omega<\infty\}.
	\end{align}
	The reproducing kernel in weighted Hardy spaces is expressed with a given weight sequence $\omega$ \cite{paulsen2016introduction}:
	\begin{align}
		\label{rep_kernel_whardy}
		k_u(v)=\sum_{s=0}^{\infty} \frac{\bar{u}^s v^s}{\omega_s},
	\end{align}
	where $u,v$ are in the unit disk $\mathbb{D}$. 
	
	It is obvious that the unweighted Hardy space is a special case of weighted Hardy spaces with the weight vector of the unit sequence, i.e., $\omega=(1,1,\cdots)$. The reproducing kernel in the unweighted Hardy space with the weight vector of the unit sequence $\omega=(1,1,\cdots)$ is given by
	\begin{align}
		k_u(v)=\sum_{s=0}^{\infty} \bar{u}^sv^s=\frac{1}{1- \bar{u}v},
		\label{rep_kernel_uw}
	\end{align}
	and this reproducing kernel is also known as the Sz\H{e}go kernel.
		
	Besides the unweighted Hardy space, we can consider the weighted Hardy spaces with $\omega_s=|\rho|^{2s}$. This weight corresponds to the exponentiation in Z-transformation meaning the scaling of $z^{-1} \to \rho z^{-1}$ in $z$-domain. This exponentiation also changes the region of convergence by a factor of $\rho$. The unweighted Hardy norm of a signal filter with the exponentiation is equivalent to the weighted Hardy norm of the signal filter. The reproducing kernel of the weighted Hardy space with the weight vector of the unit sequence $\omega_s=|\rho|^{2s}$ is represented by
	\begin{align}
		k_u(v)=\sum_{s=0}^{\infty} \frac{\bar{u}^sv^s}{|\rho|^{2s}}=\frac{1}{1- \bar{u}v/|\rho|^2},
		\label{rep_kernel_expw}
	\end{align}
	and it is straightforward to check that Eq. (\ref{rep_kernel_expw}) converges to Eq. (\ref{rep_kernel_uw}) in the limit of $|\rho| \to 1$.
	
	Other function spaces are also interpreted as weighted Hardy spaces. Another examples is the Sobolev space. The Sobolev space $\mathcal{W}^{m,p}$ for an integer $m$ and $1\le p\le\infty$ is defined as
	\begin{align}
		\mathcal{W}^{m,p}=\{f: f \in L^p, f^{(l)}\in L^p \textrm{ for }  l\le m \},
	\end{align}
	where $f^{(l)}$ is the $l$-th derivative of $f$. This space is the function space of finite Sobolev norms. 
	
	Similar to the Hardy space, analytic functions on the unit disk with $p=2$ are our main interest. In this case, the Sobolev norm of a Fourier-transformed function $f$ is expressed with
	\begin{align}
		\|f\|_{\mathcal{W}^{m,2}}=\bigg(\sum_{s=0}^{\infty} (1+s^2+s^4+\cdots+s^{2m}) |f_s|^2\bigg)^{1/2},
	\end{align}
	where $f_s$ is the $s$-th Fourier coefficient. It is obvious that the Sobolev space $\mathcal{W}^{m,2}$ is the weighted Hardy space of $\omega_s=1+s^2+s^4+\cdots+s^{2m}$.
	
	The Dirichlet space is also an example of weighted Hardy spaces. The Dirichlet space $\mathcal{D}$ is defined as the function space of the Dirichlet semi-norm given by
	\begin{align}
		\|f\|_{\mathcal{D},*}=\bigg(\frac{1}{\pi}\int_{\Omega}|f'(z)|^2d\mu\bigg)^{1/2},
	\end{align}
	where $\Omega$ is the unit disk $\mathbb{D}$ in this paper. By the definition of the Dirichlet semi-norm given above, the semi-norm of a Fourier-transformed function $f$ is represented with the Fourier coefficients in the form of
	\begin{align}
		\|f\|_{\mathcal{D},*}=\bigg(\sum_{s=1}^{\infty} s|f_s|^2\bigg)^{1/2},
	\end{align}
	and it is straightforward to confirm that the weight vector of the Dirichlet space is given by $\omega_s=s$ for a positive integer $s$. 
	
	By plugging the weight vector of $\omega_s=s$ into Eq. (\ref{rep_kernel_whardy}), the reproducing kernel in the Dirichlet semi-norm space is given by 
	\begin{align}
		k_u(v)=\log{(1-\bar{u}v)},
	\end{align}
	where $u$ and $v$ are complex numbers in the unit disk $\mathbb{D}$. 
	
	Since the semi-norm is independent of the 0-th Fourier coefficient, all constant functions belong to the identical zero norm. Several ways to extend the semi-norm to the Dirichlet norm are as follows. The simplest solution to the identical Dirichlet semi-norm issue is including the 0-th Fourier coefficient $f_0$ into the semi-norm:
	\begin{align}
		\|f\|_{\mathcal{D}}=\bigg(|f_0|^2+\|f\|^2_{\mathcal{D},*}\bigg)^{1/2}.
	\end{align}
	Another solution is adding the Hardy norm to the Dirichlet semi-norm. By adding the Hardy norm, the Dirichlet norm is given by
	\begin{align}
		\|f\|_{\mathcal{D}}=\bigg(\|f\|^2_{H^2}+\|f\|^2_{\mathcal{D},*}\bigg)^{1/2}=\bigg(\sum_{s=0}^{\infty} (1+s)|f_s|^2\bigg)^{1/2},
	\end{align}
	and the reproducing kernel in the Dirichlet space is found as
	\begin{align}
		k_u(v)=\frac{1}{\bar{u}v}\log{(1-\bar{u}v)},
	\end{align}
	where $u, v \in\mathbb{D}\setminus \{0\}$. 
	
	The Bergman space is also a weighted Hardy space. The Bergman space $\mathcal{A}$ is the function space of a finite Bergman norm defined as
	\begin{align}
		\|f\|_{\mathcal{A}}=\bigg(\frac{1}{\pi}\int_{\Omega}|f(z)|^2d\mu\bigg)^{1/2},
	\end{align}
	where $\Omega$ is given by the unit disk $\mathbb{D}$ in this paper. It is noteworthy that the Bergman norm uses $f(z)$ instead of $f'(z)$ in the Dirichlet semi-norm. The Bergman norm is represented with the Fourier coefficients $f_s$:
	\begin{align}
		\|f\|_{\mathcal{A}}=\bigg(\sum_{s=0}^{\infty}\frac{|f_s|^2}{1+s}\bigg)^{1/2},
	\end{align}
	such that the weight sequence is $\omega_s = (1+s)^{-1}$ for non-negative $s$. From Eq. (\ref{rep_kernel_whardy}) and the weight sequence, the Bergman kernel is straightforwardly obtained as 
	\begin{align}
		k_{u}(v)=\frac{1}{(1-\bar{u}v)^2},
	\end{align}
	where $u$ and $v$ are complex numbers inside the unit disk $\mathbb{D}$. 
	
	An interesting property of the Bergman kernel is that the Bergman metric is emergent from the Bergman kernel:
	\begin{align}
		g_{v\bar{u}}=\partial_v\partial_{\bar{u}}\log{k_{u}(v)}=\frac{2}{(1-\bar{u}v)^2},
	\end{align}	
	where un-barred and barred indices are holomorphic and anti-holomorphic coordinates, respectively.
	
	Similar to the Dirichlet space and the Sobolev space, function spaces of differentiation-related semi-norms can be introduced. The generalized differentiation semi-norm is given by
	\begin{align}
		\|f\|_{\tilde{\mathcal{D}}^m,*}=\bigg(\frac{1}{2\pi}\int_{-\pi}^\pi \bigg| \bigg(\frac{d}{d\theta}\bigg)^{\frac{m}{2}} f(e^{i\theta})\bigg|^2 d\theta\bigg)^{1/2},
	\end{align}
	where the fractional derivative is defined as
	\begin{align}
		\bigg(\frac{d}{d\theta}\bigg)^{m} e^{is\theta}=(is)^m e^{is\theta}
	\end{align}
	for real numbers $m$ and $s$. With Z-transformation, the generalized differentiation semi-norm is represented with
	\begin{align}
		\|f\|_{\tilde{\mathcal{D}}^m,*}=\bigg(\frac{1}{2\pi}\int_{-\pi}^\pi \bigg| \bigg(z\frac{d}{dz}\bigg)^{\frac{m}{2}} f(z)\bigg|^2 \frac{dz}{z}\bigg)^{1/2},
	\end{align}
	where the fractional derivative is defined as
	\begin{align}
		\bigg(z\frac{d}{dz}\bigg)^{m} z^s= s^m z^s
	\end{align}
	for real numbers $m$ and $s$. Although a negative real number $m$ actually corresponds to integration, we keep the convention for the integration as the generalized differentiation with a negative order in order to pursue consistency in notation. 
	
	The generalized differentiation semi-norm is expressed with the Fourier (or Z-transformed) coefficients:
	\begin{align}
		\|f\|_{\tilde{\mathcal{D}}^m,*}=\bigg(\sum_{s=0}^{\infty} s^m |f_s|^2\bigg)^{1/2},
	\end{align}
	and this semi-norm is regarded the weighted Hardy norm with the weight vector of $\omega_s=s^m$. It is straightforward to obtain the reproducing kernel in the generalized differentiation semi-norm function space as
	\begin{align}
		k_u^{(m)}(v)=Li_m(\bar{u}v),
	\end{align}
	where $u,v$ are complex numbers inside the unit disk $\mathbb{D}$, and $Li_m$ is the polylogarithm of order $m$ such that
	\begin{align}
		Li_m(z)=\sum_{s=1}^{\infty}\frac{z^s}{s^m}
	\end{align}
	for $|z|<1$. The polylogarithm is extended to $|z|=1$ by analytic continuation.
	
	The generalized differentiation semi-norm is cross-connected to various function spaces. First of all, the $m=0$ case corresponds to the unweighted Hardy space. When $m=1$, the generalized differentiation semi-norm is identical to the Dirichlet semi-norm. In addition, the Sobolev norm is also obtained from the sum of the generalized differentiation semi-norms:
	\begin{align}
		\|f\|_{\mathcal{W}^{m,2}}=\bigg(\sum_{l=0}^{m} \|f\|^2_{\tilde{\mathcal{D}}^{2l},*}\bigg)^{1/2}.
	\end{align}
	Similarly, various weighted Hardy norms are constructed from positive linear combinations of the generalized differentiation semi-norms with different $m$ values.
	
	As a summary, the weighted Hardy norms introduced in this subsection are represented as
	\begin{align}
		\|f\|_{\omega}=\bigg(\sum_{s=0}^{\infty} \omega_s |f_s|^2\bigg)^{1/2},
	\end{align}
	where $\omega_s$ is the weight sequence of a given function space. For example, the weight sequences $\omega$ of the well-known function spaces covered in this section are followings:
	\begin{align}
		\omega_s=\left\{ 
		\begin{array}{ll}
			1 & \textrm{for unweighted Hardy space } H^2\\ 
			1+s^2+s^4+\cdots+s^{2m} & \textrm{for Sobolev space } \mathcal{W}^{m,2}\\
			s & \textrm{for Dirichlet space } \mathcal{D}\\
			\frac{1}{1+s} & \textrm{for Bergman space } \mathcal{A}\\
			s^{m} & \textrm{for differentiation semi-norm space } \tilde{\mathcal{D}}^m
		\end{array}
	\right.
	\end{align}
	for a non-negative integer $s$.
	
\subsection{Cepstrum and weighted Hardy norms}
	In signal processing, the transfer function of a filter encodes essential information about the system, as it characterizes how input signals are transformed into output signals. In the frequency domain, the Fourier-transformed transfer function in the frequency domain $w$ is given by
	\begin{align}
		h(w;\boldsymbol{\xi})=\sum_{s=-\infty}^{\infty} h_s(\boldsymbol{\xi}) e^{-jsw},
	\end{align}
	where $h_s$ is the $s$-th Fourier coefficient of the filter transfer function, and $\boldsymbol{\xi}$ is the vector of signal filter parameters. 
	
	Using the convention of $z^{-s}$ in Z-transformation, the transfer function of a causal filter takes the unilateral form involving only terms with non-negative $s$. Specifically, the transfer function of a causal filter is expressed in the following unilateral form:
	\begin{align}
		h(z;\boldsymbol{\xi})=\sum_{s=0}^{\infty} h_s(\boldsymbol{\xi}) z^{-s},
	\end{align}
	and the transfer function is analytic outside the unit disk $\mathbb{D}$. The various function norms defined in the previous subsection can be applied to transfer functions in $z^{-s}$ after the inverse projection of $z\to z^{-1}$. In another way, the transfer functions are expressed in $z^{s}$ convention and the function norms are computed.
	
	Hardy spaces and the associated Hardy norms can play a fundamental role in characterizing linear systems in signal processing. For instance, the stationarity condition of a linear system can be expressed with the unweighted Hardy norm of its transfer function:
	\begin{align}
		\sum_{s=0}^{\infty}|h_s|^2=\|h(z;\boldsymbol{\xi})\|^2_{H^2}<\infty,
	\end{align}
	and this condition indicates that the transfer functions of stationary filters are functions in the unweighted Hardy space.
	
	Additionally, the concepts of the Hardy spaces and the Hardy norms are applicable to the parameter estimation of linear systems. The $H^p$-norms can be used to define the distance between two transfer functions, facilitating the identification of optimal system parameters $\boldsymbol{\xi}^*$ under given constraints $\mathcal{C}$:
	\begin{align}
		\boldsymbol{\xi}^{*}=\overset{}{\underset{\mathrm{\boldsymbol{\xi}\in\mathcal{C}}}{\textrm{argmin}}} \| h(z;\boldsymbol{\xi})-\hat{h}\|_{H^p},
	\end{align}
	where $\hat{h}$ is the target transfer function and $\mathcal{C}$ is the parameter set satisfying the constraints. A notable application is $H^{\infty}$-optimization in control theory \cite{chui1997discrete}.
	
	As generalization, we introduce a composite function $f$ of a smooth transformation $\phi$ and a transfer function $h$ such that $f=\phi\circ h:\mathbb{D}\to \mathbb{C}$ where $f$ is analytic in the unit disk $\mathbb{D}$. The weighted Hardy norm of a linear system described by $f$ is given by
	\begin{align}
	\label{weighted_hardy_general_norm}
		\mathcal{I}_{\omega}=\| \phi\circ h \|_{\omega}=\bigg(\sum_{s=0}^{\infty} \omega_s | f_s(\boldsymbol{\xi}) |^2\bigg)^{1/2},
	\end{align}
	where $f_s$ is the $s$-th Fourier coefficient of $f=\phi\circ h$. 
	
	Similarly, the distance between two linear systems $M_1$ and $M_2$, of which the transfer functions are $h_1(z;\boldsymbol{\xi}')$ and $h_2(z;\boldsymbol{\xi})$, respectively, is found as
	\begin{align}
		\label{weighted_hardy_general_dist}
		\mathcal{I}_{\omega}(M_1, M_2)=\| \phi\circ h_1-\phi\circ h_2 \|_{\omega}=\bigg(\sum_{s=0}^{\infty} \omega_s | f_{1,s}(\boldsymbol{\xi}')-f_{2,s}(\boldsymbol{\xi}) |^2\bigg)^{1/2},
	\end{align}
	where $f_{i,s}$ is the $s$-th Fourier coefficient of $f_i$.   
	
	It is noteworthy that the weighted Hardy norm-based distance of Eq. (\ref{weighted_hardy_general_dist}) is a distance measure in metric spaces. First of all, if two linear systems are identical, the weighted Hardy distance between the two systems is zero. Moreover, it is symmetric under exchange of $M_1$ and $M_2$ in Eq. (\ref{weighted_hardy_general_dist}). Furthermore, the triangle inequality is also satisfied by this distance measure.
	
	Among various candidate functions to $\phi$, the most intriguing transformation, which is also applicable to signal processing, is the logarithmic function. In this case, the logarithmic transfer function corresponds to the complex cepstrum of a linear system \cite{oppenheim1965superposition}. By using the complex cepstrum and Fourier/Z-transformations, the logarithmic transfer function of a linear system can be expressed with a power series of $z$:
	\begin{align}
		\label{log_transfer_cepstrum}
		\log{h(z;\boldsymbol{\xi})}=\sum_{s=0}^{\infty} c_s(\boldsymbol{\xi})  z^{-s},
	\end{align}
	where $c_s$ is the $s$-th complex cepstrum coefficient of the transfer function. The $s$-th complex cepstrum coefficient $c_s$ in Eq. (\ref{log_transfer_cepstrum}) is found directly from  the inner product in Hardy spaces of Eq. (\ref{hardy_inner_product}) with the basis of $\{1, z^{-1}, z^{-2}, \cdots, z^{-s}, \cdots\}$:
	\begin{align}
		\label{cepstrum_coeff}
		c_s(\boldsymbol{\xi})=\langle \log{h(z,\boldsymbol{\xi})}, z^{-s} \rangle=\frac{1}{2\pi i}\oint_{\mathbb{T}} \log{h(z,\boldsymbol{\xi})} z^{s}\frac{dz}{z},
	\end{align}
	where $\mathbb{T}$ is the unit circle.
	
	Various function norms of logarithmic transfer functions can be interpreted within the framework of weighted Hardy spaces. For instance, the complex cepstrum norm commonly used in signal processing corresponds to the unweighted Hardy norm of a logarithmic transfer function. As a natural extension of Eq. (\ref{weighted_hardy_general_norm}) and Eq. (\ref{weighted_hardy_general_dist}), weighted complex cepstrum norms are identified with the weighted Hardy norms of logarithmic transfer functions. Leveraging the homomorphic structure of the cepstrum \cite{proakisdigital}, the weighted complex cepstrum distance between two linear systems $M_1$ and $M_2$, with transfer functions $h_1$ and $h_2$, respectively, is given by
	\begin{align}
		\label{cepstrum_norm}
		\mathcal{I}_{\omega}(M_1,M_2)=\|\log{h_1}-\log{h_2}\|_{\omega}=\bigg(\sum_{s=0}^{\infty} \omega_s |c_{1,s}(\boldsymbol{\xi}')-c_{2,s}(\boldsymbol{\xi})|^2\bigg)^{1/2},
	\end{align}
	where $c_{i,s}$ is the $s$-th complex cepstrum coefficient of a linear system $M_i$. Additionally, the weighted complex cepstrum distance is also a distance measure in metric spaces as discussed above. In particular, when $h_2=1$, the weighted complex cepstrum norm of a transfer function can be interpreted as the weighted complex cepstrum distance between the input signal and the output signal produced by the corresponding filter. In this context, the weighted Hardy norm of the logarithmic transfer function serves as a quantitative measure of how the output differs from the input, effectively characterizing the system through this distance metric.

\subsection{K\"ahler geometry}
	In this subsection, we briefly review the fundamental concepts of K\"ahler manifolds that will be used in subsequent derivations. A more detailed introduction to K\"ahler manifolds can be found in reference \cite{nakahara2003geometry,choi2015application}.
	
	By definition, a K\"ahler manifold is a Hermitian manifold whose associated K\"ahler form is closed. More precisely, a complex manifold with the real dimension of $2n$ is K\"ahler if and only if its metric tensor components satisfy the following conditions:
\begin{align} 
\label{metric_con_hermitian}
	g_{ij}&=g_{\bar{\imath}\bar{j}}=0,\\
	\partial _{i}g_{j\bar{k}}=\partial _{j}g_{i\bar{k}},&\quad\partial _{\bar{\imath}}g_{k\bar{j}}=\partial _{\bar{j}}g_{k\bar{\imath}},
\label{metric_con_closed_kahler}
\end{align}
where barred and unbarred indices are the holomorphic and anti-holomorphic coordinates of the K\"ahler manifold, and $i,j,k=1,\cdots,n$. Eq. (\ref{metric_con_hermitian}) is the condition for a Hermitian manifold, and Eq. (\ref{metric_con_closed_kahler}) is for a closed K\"ahler form.

	K\"ahler manifolds take several computational advantages over non-K\"ahler manifolds. First, non-trivial components of metric tensor and Levi-Civita connection are calculated from the K\"ahler potential:
\begin{align}
\label{metric_kahler_rel}
g_{i\bar{j}}&=(g_{\bar{i}j})^*=\partial _{i}\partial _{\bar{j}}\mathcal{K},\\
\label{connection_kahler_rel}
\Gamma _{ij,\bar{k}}&=(\Gamma _{\bar{i}\bar{j},k})^*=\partial _{i}\partial _{j}\partial _{\bar{k}}\mathcal{K},
\end{align}
where $\mathcal{K}$ is the K\"ahler potential. Eq. (\ref{metric_kahler_rel}) and Eq. (\ref{connection_kahler_rel}) indicate significant benefits in geometric calculation on K\"ahler manifolds. If the K\"ahler potential is given, the components of the metric tensor and Levi-Civita connection are easily calculated from taking partial derivatives with respect to only relevant coordinates. In non-K\"ahler geometry, these two geometric objects require more lengthy and tedious computation steps with partial derivatives and summations across all the coordinates of the manifold. 

	Moreover, the Ricci tensor of K\"ahler manifolds is also straightforwardly obtained in a simpler way similar to Eq. (\ref{metric_kahler_rel}) and Eq. (\ref{connection_kahler_rel}). The Ricci tenor of K\"ahler manifolds is given by
\begin{align}  
	\label{ricci_kahler_rel}
	R_{i\bar{j}}=-\partial_i\partial_{\bar{j}}\log{\mathcal{G}},
\end{align}
where $\mathcal{G}$ is the determinant of the metric tensor Eq. (\ref{metric_kahler_rel}). The Ricci tensor of non-K\"ahler manifolds needs much more complicated calculation procedures such as additional calculation steps for Riemann curvature tensor than Eq. (\ref{ricci_kahler_rel}). 

	Furthermore, the Laplace-Beltrami operator in K\"ahler geometry is in a more straightforward form of
\begin{align}
	\Delta =2g^{i\bar{j}}\partial _{i}\partial _{\bar{j}},
\end{align}
where $g^{i\bar{j}}$ is the component of the inverse metric tensor. It is obvious that this Laplace-Beltrami operation in K\"ahler geometry is much easier to calculate than that of non-K\"ahler manifolds in the following form:
\begin{align}
	\Delta = \frac{1}{\sqrt{\mathcal{G}}}\partial_i \big( \sqrt{\mathcal{G}} g^{i\bar{j}}\partial_{\bar{j}}\big),
\end{align}
where $\mathcal{G}$ is the determinant of the metric tensor and $g^{i\bar{j}}$ is the component of the inverse metric tensor. 

\section{Geometry of linear systems in weighted Hardy spaces}
\label{sec_whardy_kahler}
	In this section, we derive the geometric structure of linear systems in weighted Hardy spaces. In particular, we prove that the manifolds induced by weighted Hardy norms, applied to various smooth transformations of transfer functions, are K\"ahler manifolds. Furthermore, we show that the K\"ahler potential of the resulting signal filter geometry is given by the square of the weighted Hardy norm of the composite function formed by the smooth transformation and the transfer function.
	
	Let us denote a composite function $f$ of a smooth transformation $\phi$ and the transfer function of a linear system $h(z;\boldsymbol{\xi})$, i.e., $f=\phi \circ h$. We can consider the weighted Hardy norm of $f$: 
	\begin{align}
		\mathcal{I}_{\omega}=\bigg(\sum_{s=0}^{\infty} \omega_s |f_s(\boldsymbol{\xi})|^2\bigg)^{1/2}.
	\end{align}
	Induced from this weighted Hardy norm, we are able to derive the geometry of signal processing filters in weighted Hardy spaces by the following theorem.
	\begin{thm}
	\label{thm_kahler_whardy}
		Let $h(z;\boldsymbol{\xi})$ be the transfer function of a linear system, and let $\phi$ be a smooth transformation such that the composite function $f=\phi \circ h$ belongs to a weighted Hardy space. Then, the information geometry induced by the weighted Hardy norm of $\phi \circ h$ defines a K\"ahler manifold with the corresponding parameter space of the linear system.
	\end{thm}
	\begin{proof}
	First of all, we can consider infinitesimal distance for deriving metric tensor. Since metric tensor components on a Riemannian manifold are extracted from an infinitesimal length, the infinitesimal weighted Hardy norm of the linear system geometry with a given weight sequence $\omega$ is straightforwardly obtained from Eq. (\ref{weighted_hardy_general_dist}):
	\begin{align}
		\label{whardy_length_element}
		\delta \mathcal{I}^2_{\omega}=\sum_{s=0}^{\infty} \omega_s |\delta f_s(\boldsymbol{\xi})|^2,
	\end{align}
	where $\delta f_s$ is expanded in terms of the infinitesimal displacements $\delta \xi^i$ along the $i$-th coordinate $\xi^i$ of the $n$-dimensional manifold. Metric tensor components of a Riemannian manifold are decoded from the quadratic terms of the infinitesimal displacements $\delta \xi^i$ in the length element Eq. (\ref{whardy_length_element}). 
	
	Alternatively, the metric tensor and connection components of the geometry induced from a divergence $\mathcal{D}$ \cite{amari2000methods} are represented as
	\begin{align}
		\label{ig_metric}
		g_{\mu\nu}&=-\mathcal{D}(\partial_\mu, \tilde{\partial}_\nu)|_{\boldsymbol{\xi}=\tilde{\boldsymbol{\xi}}},\\
		\label{ig_connection}
		\Gamma_{\mu\nu,\rho}&=-\mathcal{D}(\partial_\mu \partial_\nu,\tilde{\partial}_\rho)|_{\boldsymbol{\xi}=\tilde{\boldsymbol{\xi}}},
	\end{align}
	where $\mu, \nu, \rho$ run for the coordinates of the geometry.
	
	Using the squared weighted Hardy norm $\mathcal{I}_{\omega}^2$ as $\mathcal{D}$ in Eq. (\ref{ig_metric}), the metric tensor components of the weighted Hardy norm-induced geometry of a linear system are given by
	\begin{align}
		\label{metric_whardy1}
		g_{ij}(\boldsymbol{\xi},\bar{\boldsymbol{\xi}};\omega)&=\big(g_{\bar{i}\bar{j}}(\boldsymbol{\xi},\bar{\boldsymbol{\xi}};\omega)\big)^{*}=0,\\
		\label{metric_whardy2}
		g_{i\bar{j}}(\boldsymbol{\xi},\bar{\boldsymbol{\xi}};\omega)&=\big(g_{\bar{i}j}(\boldsymbol{\xi},\bar{\boldsymbol{\xi}};\omega)\big)^{*}=\sum_{s=0}^{\infty} \omega_s \partial_i f_s (\boldsymbol{\xi}) \partial_{\bar{j}} \bar{f}_s (\bar{\boldsymbol{\xi}}),
	\end{align}
	where $i,j=1,\cdots,n$. By Eq. (\ref{metric_whardy1}), it is straightforward to check that the geometry from a given weighted Hardy norm is the Hermitian manifold.
	
	Similarly, the connection components of the geometry are calculated as
	\begin{align}
		\label{connection_whardy}
		\Gamma_{ij,\bar{k}}(\boldsymbol{\xi},\bar{\boldsymbol{\xi}};\omega)&=(\Gamma_{\bar{i}\bar{j},k}(\boldsymbol{\xi},\bar{\boldsymbol{\xi}};\omega))^*=\sum_{s=0}^{\infty} \omega_s \partial_i\partial_{j} f_s (\boldsymbol{\xi}) \partial_{\bar{k}} \bar{f}_s (\bar{\boldsymbol{\xi}}),
	\end{align}
	where $i,j,k=1,\cdots,n$. All other components of the connection are vanishing.
	
	The Levi-Civita connection of the manifold is represented with 
	\begin{align}
		\Gamma_{\mu\nu,\rho} = \frac{1}{2}(\partial_{\mu}g_{\nu\rho}+\partial_{\nu}g_{\mu\rho}-\partial_{\rho}g_{\mu\nu}),
	\end{align}
	where $\mu,\nu,\rho$ run through holomorphic and anti-holomorphic coordinates, i.e., $\mu,\nu,\rho=1,\cdots,n,\bar{1},\cdots,\bar{n}$. From the metric tensor expressions of Eq. (\ref{metric_whardy1}) and Eq. (\ref{metric_whardy2}), the non-trivial components of the Levi-Civita connection are found as
	\begin{align}
		\label{connection_lc}
		\Gamma_{ij,\bar{k}}(\boldsymbol{\xi},\bar{\boldsymbol{\xi}};\omega)&=(\Gamma_{\bar{i}\bar{j},k}(\boldsymbol{\xi},\bar{\boldsymbol{\xi}};\omega))^*=\sum_{s=0}^{\infty} \omega_s \partial_i\partial_{j} f_s (\boldsymbol{\xi}) \partial_{\bar{k}} \bar{f}_s (\bar{\boldsymbol{\xi}}),
	\end{align}
	where $i,j,k=1,\cdots,n$. All other components of the Levi-Civita connection are vanishing.
	
	Eq. (\ref{metric_whardy1}) shows that the geometry induced from a weighted Hardy norms is a Hermitian manifold of Eq. (\ref{metric_con_hermitian}). It is also straightforward to verify that metric tensor components of Eq. (\ref{metric_whardy2}) satisfy the closed K\"ahler two-form condition of Eq. (\ref{metric_con_closed_kahler}).
		
	Since the metric tensor components, Eq. (\ref{metric_whardy1}) and Eq. (\ref{metric_whardy2}), of the geometry fulfill Eq. (\ref{metric_con_hermitian}) and Eq. (\ref{metric_con_closed_kahler}), the information manifold of a linear system is the Hermitian manifold with a closed K\"ahler two-form. By the definition of K\"ahler manifolds, the geometry of linear systems in weighted Hardy spaces is a K\"ahler manifold.
	\end{proof}
	
	In Theorem \ref{thm_kahler_whardy}, the correspondence between K\"ahler manifolds and linear system manifolds is found by using the metric tensor components of Eq. (\ref{metric_whardy1}) and Eq. (\ref{metric_whardy2}), and the connection components of Eq. (\ref{connection_whardy}) on a given linear system manifold where smooth transformations of a transfer function are in weighted Hardy spaces. 
	
	Since metric tensor components and Levi-Civita connection components in K\"ahler geometry are derived directly from the K\"ahler potential, finding the K\"ahler potential of the linear system geometry is our next main question. The following theorem provides the answer.
	
	\begin{thm}
	\label{thm_kahler_potential_whardy}
		The K\"ahler potential of the information manifold associated with a linear system in a weighted Hardy space is given by the square of the weighted Hardy norm of a composite function formed by a smooth transformation and the system's transfer function.
	\end{thm}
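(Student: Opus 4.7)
The plan is to take $\mathcal{K}(\boldsymbol{\xi},\bar{\boldsymbol{\xi}})=\|\phi\circ h\|_\omega^2=\sum_{s=0}^{\infty}\omega_s f_s(\boldsymbol{\xi})\bar{f}_s(\bar{\boldsymbol{\xi}})$ as the candidate K\"ahler potential and verify that the K\"ahler relation $g_{i\bar{\jmath}}=\partial_i\partial_{\bar{\jmath}}\mathcal{K}$ reproduces exactly the Hermitian metric components already derived in Eq.~(\ref{metric_whardy2}). Since Theorem~\ref{thm_kahler_whardy} has just established that the manifold is K\"ahler, it suffices to exhibit any scalar function whose mixed second derivatives match those components; uniqueness up to a K\"ahler transformation then justifies calling it \emph{the} K\"ahler potential.

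First I would write $\mathcal{K}$ as the splitting $\sum_s \omega_s f_s(\boldsymbol{\xi})\bar{f}_s(\bar{\boldsymbol{\xi}})$. This splitting is meaningful precisely because $f=\phi\circ h$ depends on the parameters holomorphically, so each Fourier coefficient $f_s(\boldsymbol{\xi})$ is a holomorphic function of $\boldsymbol{\xi}$ and $\bar{f}_s(\bar{\boldsymbol{\xi}})$ is its antiholomorphic counterpart on the parameter manifold. This is the same holomorphy assumption that underlies the vanishing of $g_{ij}$ in Eq.~(\ref{metric_whardy1}) and is therefore already in force.

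Next I would differentiate term by term: $\partial_{\bar{\jmath}}$ annihilates $f_s(\boldsymbol{\xi})$ and leaves $\omega_s f_s(\boldsymbol{\xi})\,\partial_{\bar{\jmath}}\bar{f}_s(\bar{\boldsymbol{\xi}})$; a subsequent $\partial_i$ strips the remaining holomorphic factor to give $\omega_s\,\partial_i f_s(\boldsymbol{\xi})\,\partial_{\bar{\jmath}}\bar{f}_s(\bar{\boldsymbol{\xi}})$. Summing over $s$ reproduces Eq.~(\ref{metric_whardy2}) exactly. As a consistency check that also illustrates the computational advantage promised in the introduction, I would likewise verify $\partial_i\partial_j\partial_{\bar{k}}\mathcal{K}=\sum_s \omega_s\,\partial_i\partial_j f_s\,\partial_{\bar{k}}\bar{f}_s$, which matches the non-trivial Levi--Civita component in Eq.~(\ref{connection_cepstrum}) via Eq.~(\ref{connection_kahler_rel}).

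The one genuinely technical point, and the main obstacle, is justifying the interchange of the infinite summation with the partial derivatives. I would handle this by appealing to the hypothesis that $\phi\circ h\in H^2_\omega$ on the parameter domain of interest: the series $\sum_s \omega_s|f_s|^2$ converges, and under the smoothness of $\phi$ and $h$ in $\boldsymbol{\xi}$ the differentiated series $\sum_s \omega_s\,\partial_i f_s\,\partial_{\bar{\jmath}}\bar{f}_s$ is precisely the metric defined in Eq.~(\ref{metric_whardy2}), whose convergence is already taken as part of the well-posedness of the induced geometry. A Cauchy--Schwarz estimate then shows that the partially differentiated series $\sum_s \omega_s f_s\,\partial_{\bar{\jmath}}\bar{f}_s$ converges locally uniformly as well, so Weierstrass' theorem on termwise differentiation applies and the formal computation above becomes rigorous.
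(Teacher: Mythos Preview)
Your proposal is correct and follows essentially the same approach as the paper: both identify $\mathcal{K}=\sum_s\omega_s|f_s|^2=\|f\|_\omega^2$ and verify via the product rule that $\partial_i\partial_{\bar\jmath}\mathcal{K}$ reproduces the metric of Eq.~(\ref{metric_whardy2}), with the Levi--Civita connection of Eq.~(\ref{connection_cepstrum}) used as a cross-check and the freedom up to purely holomorphic/anti-holomorphic terms noted. The only difference is that you make explicit the analytic justification for interchanging the infinite sum with differentiation, whereas the paper treats this step formally.
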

	\begin{proof}
		By using the product rule of derivative to Eq. (\ref{metric_whardy2}), the metric tensor of K\"ahler geometry expressed with Fourier transformation (or Z-transformation) coefficients and a weight sequence is rewritten as
		\begin{align}
			g_{i\bar{j}}(\boldsymbol{\xi},\bar{\boldsymbol{\xi}};\omega)=\partial_i \partial_{\bar{j}}\bigg(\sum_{s=0}^{\infty} \omega_s |f_s (\boldsymbol{\xi}) |^2\bigg)=\partial_i \partial_{\bar{j}}\Big(\|f(z;\boldsymbol{\xi})\|^2_{\omega}\Big).
		\end{align}
		According to Eq. (\ref{metric_kahler_rel}) stating the relation between the metric tensor and the K\"ahler potential, the K\"ahler potential $\mathcal{K}$ is given by
		\begin{align}
			\label{kahler_whardy_norm}
			\mathcal{K}=\sum_{s=0}^{\infty} \omega_s |f_s (\boldsymbol{\xi}) |^2=\|f(z;\boldsymbol{\xi})\|^2_{\omega},
		\end{align}
		up to purely holomorphic and purely anti-holomorphic functions. Since  the metric tensor is independent of the purely holomorphic and purely anti-holomorphic terms, these terms are auxiliary in the definition of the K\"ahler potential. Eq. (\ref{kahler_whardy_norm}) indicates that the K\"ahler potential is the square of a weighted Hardy norm of a smooth transformation of the transfer function.
		
		We also obtain the identical K\"ahler potential from the connection.  By applying the product rule of derivative to Eq. (\ref{connection_whardy}), the connection components are expressed in the following form:
		\begin{align}
			\Gamma_{ij,\bar{k}}(\boldsymbol{\xi},\bar{\boldsymbol{\xi}};\omega)=\sum_{s=0}^{\infty} \omega_s  \partial_i\partial_{j} f_s \partial_{\bar{k}}\bar{f}_s=\partial_i\partial_{j} \partial_{\bar{k}}\bigg(\sum_{s=0}^{\infty} \omega_s  |f_s|^2\bigg)=\partial_i\partial_{j} \partial_{\bar{k}}\mathcal{K}.
		\end{align}
		It is obvious that the K\"ahler potential found above is identical to Eq. (\ref{kahler_whardy_norm}). Starting from $\Gamma_{\bar{i}\bar{j},k}$, it is also possible to obtain the same results for the K\"ahler potential. 
	\end{proof}
	
	By Theorem \ref{thm_kahler_potential_whardy}, it is concluded that the square of a weighted Hardy norm for a composite function of a smooth transformation and the transfer function is the K\"ahler potential from which the metric tensor and the Levi-Civita connection are derived by using partial derivatives. As mentioned in the previous section, this is an advantage of the K\"ahler manifolds that the geometric objects are obtained from the K\"ahler potential by simply taking holomorphic and anti-holomorphic derivatives. Using the metric tensor, the Ricci tensor in the induced geometry is also obtained from Eq. (\ref{ricci_kahler_rel}).
	
	Theorem \ref{thm_kahler_whardy} and Theorem \ref{thm_kahler_potential_whardy} can be applied to linear systems with the stationarity condition of a signal filter.
	
	\begin{crl}
	\label{crl_kahler_weighted_stat_con}
		A linear system geometry induced by a finite weighted stationarity condition is a K\"ahler manifold. The corresponding K\"ahler potential of the linear system geometry is the square of the weighted Hardy norm of the transfer function of a linear system, i.e., the weighted stationarity condition.
	\end{crl}
	
	\begin{proof}
		If $f(z;\boldsymbol{\xi})=h(z;\boldsymbol{\xi})$, i.e., $\phi(t)=t$, the weighted Hardy norm of $f$ is identical to the weighted Hardy norm of the transfer function, i.e., the weighted stationarity condition of the linear system. By Theorem \ref{thm_kahler_whardy}, the geometry induced from the weighted stationarity condition is a K\"ahler manifold.
		
		Plugging $f(z;\boldsymbol{\xi})=h(z;\boldsymbol{\xi})$ to Eq. (\ref{kahler_whardy_norm}), the K\"ahler potential of the geometry induced from the weighted Hardy norms of the transfer function of the linear system is given by
		\begin{align}
			\mathcal{K}=\|h(z;\boldsymbol{\xi})\|^2_{\omega}=\sum_{s=0}^{\infty} \omega_s |h_s (\boldsymbol{\xi}) |^2,
		\end{align}
		where $h_s$ is the $s$-th Fourier (Z-transformed) coefficient of the linear system.
	\end{proof}
		
	Considering that the complex cepstrum norm is a special case of weighted Hardy norms, Theorem \ref{thm_kahler_whardy} and Theorem \ref{thm_kahler_potential_whardy} are also applicable to linear systems with weighted complex cepstrum norms.
	
	\begin{crl}
	\label{crl_kahler_weighted_cepstrum}
		A linear system geometry induced by a finite weighted complex cepstrum norm is a K\"ahler manifold. The corresponding K\"ahler potential of the linear system geometry is the square of the weighted complex cepstrum norm of a linear system. In other words, the K\"ahler potential of the complex cepstrum geometry is the square of the weighted Hardy norm of the logarithmic transfer function.
	\end{crl}
	
	\begin{proof}
		Similar to Corollary \ref{crl_kahler_weighted_stat_con}, this is also a special case of Theorem \ref{thm_kahler_whardy} and Theorem \ref{thm_kahler_potential_whardy}. If $f(z;\boldsymbol{\xi})=\log{h(z;\boldsymbol{\xi})}$, i.e., $\phi(t)=\log{t}$, the weighted Hardy norm of $f$ is identical to the weighted complex cepstrum norm of the linear system. According to Theorem \ref{thm_kahler_whardy}, the geometry induced from the weighted complex cepstrum norm is a K\"ahler manifold.
		
		Plugging $f(z;\boldsymbol{\xi})=\log{h(z;\boldsymbol{\xi})}$ to Eq. (\ref{kahler_whardy_norm}), the K\"ahler potential of the geometry induced from the weighted complex cepstrum norm is given by
		\begin{align}
			\label{kahler_cepstrum}
			\mathcal{K}=\|\log{h(z;\boldsymbol{\xi}})\|^2_{\omega}=\sum_{s=0}^{\infty} \omega_s |c_s (\boldsymbol{\xi}) |^2,
		\end{align}
		where $c_s$ is the $s$-th complex cepstrum coefficient of the linear system.
	\end{proof}
	
	The theorems and corollaries proven above are the two-fold generalization of the corresponding the theorems and corollaries for the K\"ahlerian information geometry of a linear system in the unweighted Hardy space \cite{choi2015kahlerian}. First of all, not being limited to the logarithmic function in Corollary \ref{crl_kahler_weighted_cepstrum} and Choi and Mullhaupt \cite{choi2015kahlerian}, the introduction of more generic transformation functions $\phi$ also induces the K\"ahler structures. When the logarithmic function is used, the information manifold of a linear system in weighted Hardy space includes the K\"ahlerian information geometry given in Choi and Mullhaupt \cite{choi2015kahlerian}. Second, it is extended from the unweighted Hardy space to weighted Hardy spaces. By using more general weight vectors $\omega$ not confined to the unit weight sequence, it is still able to generate the K\"ahler information manifolds for signal processing filters. 
	
	Various well-known information manifolds are considered as special cases of the ($\phi$, $\omega$)-generalization of K\"ahlerian information geometry of a linear system. Obviously, when $\omega=(1,1,\cdots)$ is used, we obtain the unweighted complex cepstrum geometry that is the K\"ahlerian information geometry of a signal filter in the literature \cite{choi2015kahlerian}. These information manifolds in weighted Hardy spaces are not limited to K\"ahlerian information manifolds from the unweighted complex cepstrum \cite{choi2015kahlerian}, the geometry of the weighted stationarity filters, and mutual information geometry \cite{martin2000metric}.

\section{Example: ARMA and ARFIMA models}
\label{sec_whardy_example}
	In this section, we apply the theoretical framework developed in the previous section to time series models such as ARMA models and ARFIMA models in weighted Hardy spaces.

	The transfer function of the ARFIMA$(p,d,q)$ model with model parameters $\boldsymbol{\xi}=(\xi^{(-1)},\xi^{(0)},\cdots,\xi^{(p+q)})=(\sigma, d, \lambda_1,\cdots,\lambda_p, \mu_1, \cdots, \mu_q)$ is given by
	\begin{align}
		\label{tf_arfima}
		h(z;\boldsymbol{\xi})=\frac{\sigma^2}{2\pi}\frac{(1-\mu_{1}z^{-1})(1-\mu_{2}z^{-1})\cdots(1-\mu_{q}z^{-1})}{(1-\lambda_{1}z^{-1})(1-\lambda_{2}z^{-1})\cdots(1-\lambda_{p}z^{-1})}(1-z^{-1})^d,
	\end{align}
	where $\lambda_i$ ($|\lambda_i|<1$) is a pole from the AR part of the transfer function, and $\mu_i$ ($|\mu_i|<1$) is a zero from the MA part of the transfer function. It is straightforward that the ARMA transfer function is acquired by setting up $d=0$ in Eq. (\ref{tf_arfima}). 
	
	Among various smooth transformation functions of a transfer function, let us concentrate on the logarithmic function because of its extensive applications in signal processing and time series analysis. 
	
	Plugging Eq. (\ref{tf_arfima}) into the logarithmic function, the logarithmic transfer function of the ARFIMA system is obtained as
	\begin{align}
		\log{h(z;\boldsymbol{\xi})}=\sum_{i=1}^{p+q}\gamma_i\log{(1-\xi^{(i)} z^{-1})}+d\log{(1-z^{-1})}+\log{\frac{\sigma^2}{2\pi}},
	\end{align}
	where $\gamma_i=-1$ if $\xi^{(i)}$ ($1\le i \le p$) is a pole of the transfer function, and $\gamma_i=1$ if $\xi^{(i)}$ ($p+1\le i\le p+q$) is a root of the transfer function. It is obvious that $d=0$ implies the transfer function of ARMA models. 
	
	Considering the constant $\sigma$-submanifold, the $s$-th complex cepstrum coefficient $c_s$ of the ARFIMA model is derived from the definition of complex cepstrum in Eq. (\ref{cepstrum_coeff}):
	\begin{align}
	\label{cepstrum_coef_arfima}
	c_s=\left\{ 
	\begin{array}{ll}
	\frac{d+\sum_{i=1}^{p+q} \gamma_i (\xi^{(i)})^s}{s} & (s \neq 0)\\ 
	1 & (s=0),
	\end{array}
	\right.
\end{align}
	and it is noticeable that the complex cepstrum coefficients of the ARMA model are obtained by setting up $d=0$ in the cepstrum expression of Eq. (\ref{cepstrum_coef_arfima}).
	
\subsection{Arbitrary $\omega_s$}
	With the complex cepstrum of Eq. (\ref{cepstrum_coef_arfima}) and a weight vector $\omega$, the K\"ahler potential is easily found from Eq. (\ref{kahler_cepstrum}): 
	\begin{align}
	\mathcal{K}^{(\omega)}=\sum_{s=1}^{\infty} \omega_s\bigg|\frac{d+\sum_{i=1}^{p+q} \gamma_i (\xi^{(i)})^s}{s}\bigg|^2
+\omega_0.
	\label{kahler_weighted_arfima_general}
	\end{align}
	It is noteworthy that the convergence of the K\"ahler potential on the  K\"ahler-ARFIMA manifold is not always guaranteed and depends on the weight vector $\omega_s$ and filter parameters $\bold{\xi}$ such as the difference parameter, poles, and roots. However, since we consider only smooth transformation functions of signal filter transfer functions in weighted Hardy spaces, the convergence is obviously expected and we will comment otherwise. 
	
	Plugging the K\"ahler potential of Eq. (\ref{kahler_weighted_arfima_general}) into Eq. (\ref{metric_kahler_rel}), the metric tensor of the K\"ahler-ARFIMA manifolds is given by
	\begin{align}
	\label{eqn_metric_kahler_arfima_general}
		g_{u\bar{v}}=\left( 
			\begin{array}{cc}
				\sum_{s=1}^{\infty} \omega_s/s^2 & \gamma_j \sum_{s=1}^{\infty} \omega_s (\bar{\xi}^{(j)})^{s-1}/s \\ 
				\gamma_i \sum_{s=1}^{\infty} \omega_s (\xi^{(i)})^{s-1}/s  &\gamma_i\gamma_j \sum_{s=1}^{\infty} \omega_s  (\xi^{(i)}\bar{\xi}^{(j)})^{s-1}
			\end{array}
		\right),
	\end{align}
	where $u,v$ run from 0 to $p+q$, and $i,j$ run from 1 to $p+q$.
		
	Similar to the metric tensor, the Levi-Civita connection is also found from Eq. (\ref{connection_kahler_rel}) and Eq. (\ref{kahler_weighted_arfima_general}) as
	\begin{align}
		\label{eqn_connection_ijd_arfima_general}
		\Gamma_{ij, \bar{0}}&=-\gamma_i\sum_{s=1}^{\infty} \omega_s \bigg(1-\frac{1}{s}\bigg)(\xi^{(i)})^{s-2} \delta_{ij},\\
		\label{eqn_connection_ijk_arfima_general}
		\Gamma_{ij,\bar{k}}&=\gamma_j\gamma_k \sum_{s=1}^{\infty} \omega_s (s-1) (\xi^{(j)})^{s-2} (\bar{\xi}^{(k)})^{s-1}\delta_{ij},
	\end{align}
	where $i,j,k$ run from 1 to $p+q$. 
	
	Since the ARMA models are submodels of the ARFIMA model with $d=0$, the information manifolds of the ARMA models are also submanifolds of the ARFIMA models. Based on this fact, metric tensor components of the K\"ahler-ARMA geometry are obtained from those of the K\"ahler-ARFIMA geometry. From Eq. (\ref{eqn_metric_kahler_arfima_general}), the metric tensor of the K\"ahler-ARMA manifolds is found as
	\begin{align}
	\label{eqn_metric_kahler_arma_general}
		g_{i\bar{j}}=\gamma_i\gamma_j \sum_{s=1}^{\infty} \omega_s  (\xi^{(i)}\bar{\xi}^{(j)})^{s-1},
	\end{align}
	where $i,j$ run from 1 to $p+q$. It is also possible to derive the metric tensor directly from the K\"ahler potential by using Eq. (\ref{metric_kahler_rel}).
	
	The non-trivial components of Levi-Civita connection are derived not only from Eq. (\ref{eqn_connection_ijk_arfima_general}) but also from Eq. (\ref{connection_kahler_rel}):
	\begin{align}
		\label{eqn_connection_ijk_arma_general}
		\Gamma_{ij,\bar{k}}=\gamma_j\gamma_k \sum_{s=1}^{\infty} \omega_s (s-1) (\xi^{(j)})^{s-2} (\bar{\xi}^{(k)})^{s-1}\delta_{ij},
	\end{align}
	where $i,j,k$ run from 1 to $p+q$. 
	
	\subsection{$\omega_s=s^m$}
	For simplification, let us confine the weight sequence to $\omega_s=s^m$ for a real number $m$. As mentioned earlier, this weight sequence is related to the differentiation semi-norm spaces that are a building block for other weighted Hardy spaces. 
	
	According to Theorem \ref{thm_kahler_whardy} and Corollary \ref{crl_kahler_weighted_cepstrum}, geometry induced from finite weighted complex cepstrum norms is K\"ahler geometry. By using Theorem \ref{thm_kahler_potential_whardy} and Corollary \ref{crl_kahler_weighted_cepstrum}, not only applying the complex cepstrum coefficients of Eq. (\ref{cepstrum_coef_arfima}) to Eq. (\ref{kahler_cepstrum}) but also plugging $\omega_s=s^m$ to Eq. (\ref{kahler_weighted_arfima_general}) provide the K\"ahler potential of the K\"ahler-ARFIMA geometry given by
	\begin{align}
		\label{kahler_potential_weighted_arfima_general}
		\mathcal{K}^{(m)}=\sum_{s=1}^{\infty} s^{m}\bigg|\frac{d+\sum_{i=1}^{p+q} \gamma_i (\xi^{(i)})^s}{s}\bigg|^2
+\delta_{m,0},
	\end{align}
	where $\delta_{m,0}$ is the Kronecker delta.  
	
	It is noteworthy that the $|d|^2$-term in the K\"ahler potential can be divergent when $m\ge1$. One way of making finite complex cepstrum norms is to limit the range of $m$ to $m <1$. In this range of $m$, the weighted complex cepstrum norm of the ARFIMA model with $\omega_s=s^m$ becomes finite. When weighted complex cepstrum norms are convergent, the K\"ahler potential of the K\"ahler-ARFIMA geometry is decomposed into the following terms:
	\begin{align}
		\label{kahler_potential_weighted_cepstrum_arfima_general}
		\begin{split}
		\mathcal{K}^{(m)}=&\sum_{i,j=1}^{n}\gamma_i\gamma_j Li_{2-m}(\xi^{(i)}\bar{\xi}^{(j)})+\sum_{i=1}^{n} \gamma_i\big(d Li_{2-m}(\bar{\xi}^{(i)})+\bar{d} Li_{2-m}(\xi^{(i)})\big)\\
		&+|d|^2 Li_{2-m}(1)+\delta_{m,0},
		\end{split}
	\end{align}
	where $Li_t$ is the polylogarithm of order $t$.
	
	Another way of avoiding divergent complex cepstrum norms for arbitrary $m$ is regularizing the divergent term in the norm by setting $d=0$. Since the divergent term is dependent only on the difference parameter $d$, plugging $d=0$ not only removes the divergent term but also corresponds to the model reduction from the ARFIMA models to the ARMA models where the K\"ahler potential is always finite for any $m$ values. 
	
	By setting $d=0$, the K\"ahler potential of the ARMA geometry is given in the following form of
	\begin{align}
		\label{weighted_kahler_potential_arma}
		\mathcal{K}^{(m)}=\sum_{i,j=1}^{n}\gamma_i\gamma_j Li_{2-m}(\xi^{(i)}\bar{\xi}^{(j)})+\delta_{m,0},
	\end{align}
	where $Li_t$ is the polylogarithm of order $t$. It is obvious that the polylogarithm functions in the K\"ahler potential are all finite for the poles and the zeros of the ARMA transfer function in the unit disk $\mathbb{D}$.
	
	By using Eq. (\ref{metric_kahler_rel}), the metric tensor of the K\"ahler-ARFIMA geometry with the weight sequence of $\omega_s=s^m$ is obtained from the K\"ahler potential of Eq. (\ref{kahler_potential_weighted_cepstrum_arfima_general}). For $m<1$, the metric tensor of the K\"ahler-ARFIMA manifolds is given by
	\begin{align}
	\label{eqn_metric_kahler_arfima}
		g^{(m)}_{u\bar{v}}=\left( 
			\begin{array}{cc}
				Li_{2-m}(1) & \gamma_j Li_{1-m}(\bar{\xi}^{(j)})/\bar{\xi}^{(j)} \\ 
				\gamma_i Li_{1-m}(\xi^{(i)})/\xi^{(i)} &\gamma_i\gamma_jLi_{-m}(\xi^{(i)}\bar{\xi}^{(j)})/(\xi^{(i)}\bar{\xi}^{(j)})
\end{array}
\right),
\end{align}
	where $u,v$ run from 0 to $p+q$, and $i,j$ run from 1 to $p+q$. 
	
	In the weighted complex cepstrum geometry of the K\"ahler-ARFIMA model, the non-vanishing components of Levi-Civita connection are obtained by plugging from Eq. (\ref{kahler_potential_weighted_cepstrum_arfima_general}) to Eq. (\ref{connection_kahler_rel}):
	\begin{align}
		\label{eqn_connection_ijd_arfima}
		\Gamma_{ij, \bar{0}}&=-\gamma_j \frac{Li_{-m}(\xi^{(i)})-Li_{1-m}(\xi^{(i)})}{(\xi^{(i)})^2}\delta_{ij},\\
		\label{eqn_connection_ijk_arfima}
		\Gamma_{ij,\bar{k}}&=\gamma_j\gamma_k \frac{Li_{-m-1}(\xi^{(j)}\bar{\xi}^{(k)})-Li_{-m}(\xi^{(j)}\bar{\xi}^{(k)})}{(\xi^{(j)})^2\bar{\xi}^{(k)}}\delta_{ij},
	\end{align}
	where $i,j,k$ run from 1 to $p+q$. 
	
	Similar to the K\"ahler-ARFIMA geometry, the metric tensor of the K\"ahler-ARMA manifold is derived from partial derivatives on its K\"ahler potential of Eq. (\ref{weighted_kahler_potential_arma}). Opposite to the K\"ahler-ARFIMA geometry, the K\"ahler potential of the K\"ahler-ARMA geometry is finite for arbitrary $m$. Additionally, it is also possible to obtain the metric tensor from the submanifold of the K\"ahler-ARFIMA geometry, Eq. (\ref{eqn_metric_kahler_arfima}). The metric tensor of the K\"ahler-ARMA geometry is represented with
	\begin{align}
		g^{(m)}_{i\bar{j}}=\gamma_i \gamma_j Li_{-m}(\xi^{(i)}\bar{\xi}^{(j)})/(\xi^{(i)}\bar{\xi}^{(j)}),
	\end{align}
	 where $i,j$ run from 1 to $p+q$. It is straightforward to verify that the K\"ahler-ARMA manifolds are the submanifolds of the K\"ahler-ARFIMA manifolds. 
	 
	 In the K\"ahler-ARMA geometry, the non-trivial components of the Levi-Civita connection are represented with Eq. (\ref{eqn_connection_ijk_arfima}):
	 \begin{align}
		\label{eqn_connection_ijk_arma}
		\Gamma_{ij,\bar{k}}=\gamma_j\gamma_k \frac{Li_{-m-1}(\xi^{(j)}\bar{\xi}^{(k)})-Li_{-m}(\xi^{(j)}\bar{\xi}^{(k)})}{(\xi^{(j)})^2\bar{\xi}^{(k)}}\delta_{ij},
	\end{align}
	where $i,j,k$ run from 1 to $p+q$. The components also are derived from Eq. (\ref{connection_kahler_rel}), partial derivatives to the K\"ahler potential.
	 
\subsubsection{$m=0$}
	The weight vector for $m=0$ is the unit sequence of $\omega_s=1$ for all non-negative integers $s$. With the unit weight sequence, the weighted complex cepstrum norm is given by
	\begin{align}
		\mathcal{I}^{(0)}_\omega=\bigg(\sum_{s=0}^{\infty} |c_s|^2\bigg)^{1/2},
	\end{align}
	and this norm is exactly the unweighted complex cepstrum norm. Since the K\"ahler potential of the geometry with $m=0$ is the square of the unweighted complex cepstrum norm by Theorem \ref{thm_kahler_potential_whardy}, the K\"ahler potential of the geometry is represented with
	\begin{align}
		\label{kahler_potential_arfima_zero}
		\mathcal{K}^{(0)}=\sum_{s=1}^{\infty} \bigg|\frac{d+(\mu _{1}^{s}+\cdots +\mu_{q}^{s})-(\lambda _{1}^{s}+\cdots +\lambda _{p}^{s})}{s}\bigg|^2+1.
	\end{align}
	It is straightforward to check that the K\"ahler potential of Eq. (\ref{kahler_potential_arfima_zero}) is identical up to the last constant term on the right-hand side to the K\"ahler potential of the K\"ahler-ARFIMA geometry in Choi and Mullhaupt \cite{choi2015geometric} where the ARFIMA model in the literature was scaled up to the constant gain. Considering the $\sigma$-submanifolds, it is straightforward to show that the K\"ahler potential of Eq. (\ref{kahler_potential_arfima_zero}) is identical to the K\"ahler potential in the literature. When $d=0$ in Eq. (\ref{kahler_potential_arfima_zero}), the K\"ahler potential is same up to the gain term with the K\"ahler potential of the ARMA geometry in Choi and Mullhaupt \cite{choi2015kahlerian} where the model is also $\sigma$-constant.
	
	 The metric tensor of the geometry is obtained by either plugging $m=0$ to Eq. (\ref{eqn_metric_kahler_arfima}) or taking partial derivatives to the K\"ahler potential of Eq. (\ref{kahler_potential_arfima_zero}). For $m=0$, the metric tensor of the K\"ahler-ARFIMA geometry is found as
\begin{align}
	g^{(0)}_{u\bar{v}}=\left( 
		\begin{array}{ccc}
		\frac{\pi^2}{6}  & \frac{1}{\bar{\lambda}_{j}}\log {(1-\bar{\lambda}_{j})}& -\frac{1}{\bar{\mu}_{j}}\log {(1-\bar{\mu}_{j})} \\ 
		\frac{1}{\lambda _{i}}\log {(1-\lambda _{i})} & \frac{1}{1-\lambda_i\bar{\lambda}_j} & -\frac{1}{1-\lambda _{i}\bar{\mu}_{j}} \\ 
		-\frac{1}{\mu _{i}}\log {(1-\mu _{i})} & -\frac{1}{1-\mu _{i}\bar{\lambda}_{j}} & \frac{1}{1-\mu _{i}\bar{\mu}_{j}}
		\end{array}
	\right),
	\label{eqn_metric_kahler_arfima_zero}
\end{align}
	where $u,v$ run for $(\xi^{(0)}=d,\xi^{(1)}=\lambda_1, \cdots,\xi^{(p)}=\lambda_p, \xi^{(p+1)}=\mu_1,\cdots,\xi^{(p+q)}=\mu_q)$, the index for $\lambda_i$ runs from 1 to $p$, and the index for $\mu_i$ runs from 1 to $q$. As described above, the metric tensor is matched to the metric tensor for the K\"ahler information geometry of the ARFIMA models \cite{choi2015geometric}.
	
	The non-vanishing components of the Levi-Civita connection on the K\"ahler-ARFIMA manifolds are also derived not only from Eq. (\ref{eqn_connection_ijd_arfima}) and Eq. (\ref{eqn_connection_ijk_arfima}) but also the K\"ahler potential of Eq. (\ref{kahler_potential_arfima_zero}):
	\begin{align}
		\label{eqn_connection_ijd_arfima_zero}
		\Gamma_{ij, \bar{0}}&=-\gamma_j \frac{\frac{\xi^{(i)}}{1-\xi^{(i)}}+\log{(1-\xi^{(i)})}}{(\xi^{(i)})^2}\delta_{ij},\\
		\Gamma_{ij,\bar{k}}&=\gamma_j\gamma_k \frac{\bar{\xi}^{(k)}}{(1-\xi^{(j)}\bar{\xi}^{(k)})^2}\delta_{ij},
	\label{eqn_connection_ijk_arfima_zero}
	\end{align}
	where $i,j,k$ run from 1 to $p+q$ and $\delta_{ij}$ is the Kronecker delta. These connection components were not calculated previously in the K\"ahler-ARFIMA geometry \cite{choi2015geometric}. However, $\Gamma_{ij,\bar{k}}$ components in this K\"ahler-ARFIMA geometry are also matched with those in the K\"ahler information geometry of the ARMA models \cite{choi2015kahlerian}. 
	
	Similar to the K\"ahler-ARFIMA geometry, the $m=0$ geometry of the K\"ahler-ARMA models is derived from the identical procedure. The metric tensor of the K\"ahler-ARMA manifolds is given by
	\begin{align}
	g^{(0)}_{u\bar{v}}=\left( 
		\begin{array}{cc}
		\frac{1}{1-\lambda_i\bar{\lambda}_j} & -\frac{1}{1-\lambda _{i}\bar{\mu}_{j}} \\ 
		-\frac{1}{1-\mu _{i}\bar{\lambda}_{j}} & \frac{1}{1-\mu _{i}\bar{\mu}_{j}}
		\end{array}
	\right),
	\label{eqn_metric_kahler_arma_zero}
\end{align}
	and this metric tensor is also consistent with the metric tensor for the K\"ahlerian information geometry of the ARMA models \cite{choi2015kahlerian}. 
	
	The non-vanishing Levi-Civita connection components in the K\"ahler-ARMA geometry are $\Gamma_{ij,\bar{k}}$ in the K\"ahler-ARFIMA geometry:
	\begin{align}
		\Gamma_{ij,\bar{k}}=\gamma_j\gamma_k \frac{\bar{\xi}^{(k)}}{(1-\xi^{(j)}\bar{\xi}^{(k)})^2}\delta_{ij},
		\label{eqn_connection_ijk_arma_zero}
	\end{align} 
	where $i,j,k$ run from 1 to $p+q$ and $\delta_{ij}$ is the Kronecker delta. Since the geometry in the literature \cite{choi2015kahlerian} is the constant-gain submanifold of the full K\"ahler-ARMA manifold, the non-trivial components of the Levi-Civita connection on the $\sigma$-submanifold are only $\Gamma_{ij,\bar{k}}$, identical to $\Gamma_{ij,\bar{k}}$ of the full K\"ahler-ARMA geometry in this paper.

\subsubsection{$m=1$}	
	When $m=1$, the weighted complex cepstrum norm with the weight sequence $\omega_s=s$ for all positive integers $s$ is given by
	\begin{align}
		\mathcal{I}^{(1)}_\omega=\bigg(\sum_{s=1}^{\infty} s|c_s|^2\bigg)^{1/2},
	\end{align}
	and it is also known as the Hilbert-Schmidt norm of the Hankel matrix in complex cepstrum coefficients. Additionally, the norm is related to mutual information between past and future in ARMA model studied by \cite{oppenheim1965superposition, martin2000metric}. 
	
	Since the K\"ahler potential of the geometry is the square of the weighted complex cepstrum norm, the K\"ahler potential of the K\"ahler-ARFIMA geometry is provided by Eq. (\ref{kahler_potential_weighted_arfima_general}):
	\begin{align}
		\mathcal{K}^{(1)}=\sum_{s=1}^{\infty} \frac{|d+(\mu _{1}^{s}+\cdots +\mu_{q}^{s})-(\lambda _{1}^{s}+\cdots +\lambda _{p}^{s})|^2}{s}.
	\end{align}
	It is noteworthy that there is no $\sigma$-dependence in the expression of the K\"ahler potential. However, this K\"ahler potential of the K\"ahler-ARFIMA models with the weight sequence of $\omega_s=s$ is not guaranteed as finite because $\sum_{s=1}^{\infty}\frac{|d|^2}{s}$ is divergent as mentioned above. 
	
	With $m=1$, the K\"ahler potential of the ARFIMA geometry from the weighted complex cepstrum norm is convergent if and only if $d=0$, i.e., the ARFIMA model is reduced to the ARMA model. In this case, the K\"ahler potential of the K\"ahler-ARMA geometry from the weighted complex cepstrum norms is found from Eq. (\ref{weighted_kahler_potential_arma}):
	\begin{align}
		\mathcal{K}^{(1)}=\sum_{i,j=1}^{n} \gamma_i\gamma_jLi_{1}(\xi^i\bar{\xi}^j)=\log{\bigg(\frac{\prod_{i=1}^{p}\prod_{j=1}^{q}(1-\lambda_i\bar{\mu}_j)\prod_{i=1}^{p}\prod_{j=1}^{q}(1-\mu_j\bar{\lambda}_i)}{\prod_{i=1}^{p}\prod_{j=1}^{p}(1-\lambda_i\bar{\lambda}_j)\prod_{i=1}^{q}\prod_{j=1}^{q}(1-\mu_i\bar{\mu}_j)}\bigg)},
	\end{align}
	where $Li_1(z)=-\log{(1-z)}$. Considering that the K\"ahler potential is the square of the weighted complex cepstrum norm by Corollary \ref{crl_kahler_weighted_cepstrum}, the K\"ahler potential of the ARMA model for $m=1$  is the mutual information between past and future in the ARMA processes studied by Martin \cite{martin2000metric}. It is also interesting that the parameters in the weighted complex cepstrum norm are only the roots and the poles of the transfer function in the ARMA model. The fact that the length measure is independent with the gain parameter $\sigma$ is also consistent with the $\sigma$-quotient model mentioned in Martin \cite{martin2000metric}. 
	
	From the K\"ahler potential found above, the metric tensor of the K\"ahler-ARMA geometry with $m=1$ is given by
\begin{align*}
	g^{(1)}_{u\bar{v}}=\left( 
	\begin{array}{ccc}
	\frac{1}{(1-\lambda _{i}\bar{\lambda}_{j})^2} & -\frac{1}{(1-\lambda_{i}\bar{\mu}_{j})^2} \\ 
	-\frac{1}{(1-\mu _{i}\bar{\lambda}_{j})^2} & \frac{1}{(1-\mu_{i}\bar{\mu}_{j})^2}
	\end{array}\right),
\end{align*}
	where $u,v$ run for $(\xi^{(1)}=\lambda_1, \cdots,\xi^{(p)}=\lambda_p, \xi^{(p+1)}=\mu_1,\cdots,\xi^{(p+q)}=\mu_q)$, the index for $\lambda_i$ runs from 1 to $p$, and the index for $\mu_i$ runs from 1 to $q$. The metric tensor is identical to the metric tensor from mutual information between past and future of ARMA models in Martin \cite{martin2000metric}. If only the pure AR part or the pure MA part of the metric tensor are considered, the metric tensor of the submanifold is considered as the Poincar\'e metric of polydisc. Additionally, this metric is the Bergman metric defined above.
	
	It is also straightforward to calculate non-vanishing components of the Levi-Civita connection on the K\"ahler-ARMA manifolds induced from the mutual information between past and future in the ARMA model. By using Eq. (\ref{eqn_connection_ijk_arfima_general}), the connection components are given by
	\begin{align}
		\Gamma_{ij,\bar{k}}=\gamma_j\gamma_k \frac{2\bar{\xi}^{(k)}}{(1-\xi^{(j)}\bar{\xi}^{(k)})^3}\delta_{ij},
	\end{align}
	and all other components are vanishing.

	\subsection{$\omega_s=|\rho|^{2s}$}
	By plugging $\omega_s=|\rho|^{2s}$ into Eq. (\ref{kahler_weighted_arfima_general}), the K\"ahler potential of the K\"ahler-ARFIMA manifold with the exponentiation factor is represented with
	\begin{align}
		\mathcal{K}=\sum_{s=1}^{\infty} |\rho|^{2s}\bigg|\frac{d+\sum_{i=1}^{p+q} \gamma_i (\xi^{(i)})^s}{s}\bigg|^2+1,
		\label{kahler_weighted_arfima_expw}
	\end{align}
	and it is easy to check that in the limit of $|\rho| \to 1$, Eq. (\ref{kahler_weighted_arfima_expw}) is converged to the unweighted case, Eq. (\ref{kahler_potential_arfima_zero}).
	
	By either plugging the weight vector to Eq. (\ref{eqn_metric_kahler_arfima_general}) or taking partial derivatives of Eq. (\ref{kahler_weighted_arfima_expw}), the metric tensor of the K\"ahler-ARFIMA manifold with the exponentiation is given in the following form of
	\begin{align}
	\label{eqn_metric_kahler_arfima_expw}
		g_{u\bar{v}}=\left( 
			\begin{array}{cc}
				\sum_{s=1}^{\infty} |\rho|^{2s}/s^2 & \gamma_j \sum_{s=1}^{\infty} |\rho|^{2s} (\bar{\xi}^{(j)})^{s-1}/s \\ 
				\gamma_i \sum_{s=1}^{\infty} |\rho|^{2s} (\xi^{(i)})^{s-1}/s  &\gamma_i\gamma_j \sum_{s=1}^{\infty} |\rho|^{2s}  (\xi^{(i)}\bar{\xi}^{(j)})^{s-1}
			\end{array}
		\right),
	\end{align}
	and it is also straightforward to check that Eq. (\ref{eqn_metric_kahler_arfima_expw}) is reduced to the unweighted case, Eq. (\ref{eqn_metric_kahler_arfima_zero}) as $|\rho| \to 1$.
	
	Similarly, the Levi-Civita connection of the K\"ahler-ARFIMA manifold with the exponentiation factor is also found as
	\begin{align}
		\label{eqn_connection_ijd_arfima_expw}
		\Gamma_{ij, \bar{0}}&=-\gamma_i\sum_{s=1}^{\infty} |\rho|^{2s} \bigg(1-\frac{1}{s}\bigg)(\xi^{(i)})^{s-2} \delta_{ij},\\
		\label{eqn_connection_ijk_arfima_expw}
		\Gamma_{ij,\bar{k}}&=\gamma_j\gamma_k \sum_{s=1}^{\infty} |\rho|^{2s} (s-1) (\xi^{(j)})^{s-2} (\bar{\xi}^{(k)})^{s-1}\delta_{ij},
	\end{align}
	where $i,j,k$ run from 1 to $p+q$. Both Eq. (\ref{eqn_connection_ijd_arfima_expw}) and Eq. (\ref{eqn_connection_ijk_arfima_expw}) converge to Eq. (\ref{eqn_connection_ijd_arfima_zero}) and Eq. (\ref{eqn_connection_ijk_arfima_zero}) in the limit of $|\rho| \to 1$, respectively.
	
	As mentioned above, the ARMA model is a special case of the ARFIMA model with $d=0$. The information manifolds of the ARMA models are indeed submanifolds of the ARFIMA geometry. Based on this fact, the metric tensor components of the K\"ahler-ARMA manifolds are easily acquired from those of the K\"ahler-ARFIMA manifolds. By taking the components of the metric tensor from Eq. (\ref{eqn_metric_kahler_arfima_expw}), the metric tensor of the K\"ahler-ARMA manifolds is found as
	\begin{align}
	\label{eqn_metric_kahler_arma_expw}
		g_{i\bar{j}}=\gamma_i\gamma_j \sum_{s=1}^{\infty} |\rho|^{2s}  (\xi^{(i)}\bar{\xi}^{(j)})^{s-1},
	\end{align}
	where $i,j$ run from 1 to $p+q$. It is also possible to gain the metric tensor directly from the K\"ahler potential by using Eq. (\ref{metric_kahler_rel}). Similar to the ARFIMA case, Eq. (\ref{eqn_metric_kahler_arma_expw}) is converging to the unweighted case, Eq. (\ref{eqn_metric_kahler_arma_zero}) as $|\rho| \to 1$.
	
	The non-trivial Levi-Civita connection components of the K\"ahler-ARMA geometry are derived not only from Eq. (\ref{eqn_connection_ijk_arfima_general}) but also from Eq. (\ref{connection_kahler_rel}). Additionally, those can be obtained from Eq. (\ref{eqn_connection_ijk_arfima_expw}):
	\begin{align}
		\label{eqn_connection_ijk_arma_expw}
		\Gamma_{ij,\bar{k}}=\gamma_j\gamma_k \sum_{s=1}^{\infty} |\rho|^{2s} (s-1) (\xi^{(j)})^{s-2} (\bar{\xi}^{(k)})^{s-1}\delta_{ij},
	\end{align}
	where $i,j,k$ run from 1 to $p+q$. By taking $|\rho| \to 1$, the Levi-Civita connection also converge to the unweighted case, Eq. (\ref{eqn_connection_ijk_arma_zero}).
	
\section{Conclusion}
\label{sec_whardy_conclusion}
	In this paper, we investigate the geometric properties of linear systems within the framework of weighted Hardy spaces. We show that the information manifold of signal filters, induced by the weighted Hardy norms of composite functions formed by smooth transformations and transfer functions, is a K\"ahler manifold. The weighted Hardy norm of the transformed transfer function is directly related to the K\"ahler potential of the corresponding signal filter geometry. With the structural properties of K\"ahler manifolds, key geometric objects, such as the metric tensor, the Levi-Civita connection, and the Ricci tensor, can be efficiently derived from the K\"ahler potential.
	
	The use of weighted Hardy spaces with weight sequences $\omega$, along with smooth transformations $\phi$ of transfer functions, constitutes what we refer to as the ($\phi$,$\omega$)-generalization of the K\"ahler information geometry developed in the context of the unweighted Hardy space \cite{choi2015kahlerian}. This generalized framework provides a unified and flexible approach for identifying and analyzing the underlying information manifolds associated with signal processing filters across various weighted Hardy spaces.
	
	As a special case of weighted Hardy norms, the weighted complex cepstrum norms of a linear system also give rise to information manifolds equipped with K\"ahler structures. In this setting, the K\"ahler potential of the complex cepstrum geometry is given by the square of the weighted complex cepstrum norm. The unweighted case corresponds to the K\"ahlerian information geometry developed in Choi and Mullhaupt \cite{choi2015kahlerian}. Moreover, for weight sequences of the form $\omega_s = |\rho|^{2s}$ , the weighted Hardy norms of signal filters are equivalent to the unweighted Hardy norms of signal filters with the exponentiation of $\rho$, i.e., information geometry of signal filters with the exponentiation of $\rho$ can be derived from the concept of weighted Hardy norms.

	As applications of the K\"ahler geometric approach based on weighted Hardy norms, we exemplify ARMA and ARFIMA models with arbitrary weight vectors. Several interesting information manifolds naturally emerge from this framework. As special cases, these include the K\"ahlerian information manifolds associated with the unweighted complex cepstrum norm \cite{choi2015kahlerian}, the geometric structure related to the mutual information between past and future \cite{martin2000metric}, and the information geometry of signal filters involving exponentiation. All of these cases can be understood within a unified framework based on K\"ahler manifolds and weighted Hardy spaces.

\section*{Acknowledgment}
	We thank Xiang Shi and Andrew P. Mullhaupt for discussions on mutual information between past and future in ARMA models \cite{martin2000metric}.

\bibliographystyle{plain}
\bibliography{KahlerInfoWeightedHardy}

\end{document}